\newtheorem{theorem}{Theorem}
\newtheorem{lemma}[theorem]{Lemma}
\theoremstyle{definition}
\newtheorem*{remark}{Remark}
\g@addto@macro{\thm@space@setup}{\thm@headpunct{:}}
\newcommand{\e}{\mathrm{e}} 
\newcommand{\dd}[1]{\, \mathrm{d} #1}
\newcommand{\ex}[1]{\, \mathrm{exp} \left( #1 \right)}
\newcommand{\1}[1]{\, \mathbb{I} \left\{ #1 \right\}}
\renewcommand{\i}{{\mathbf{\mathfrak{i}}}}
\newcommand{\EV}[2][\!]{\mathbb{E}^{#1} \left[ #2 \right]} 
\newcommand{\logn}[1]{\, \ln \left( #1 \right)}
\newcommand{\sign}[1]{\, \mathrm{sgn} \left( #1 \right)}
\newcommand{\T}{\cdot}
\DeclareMathOperator{\PM}{\mathbb{P}}
\DeclareMathOperator{\QM}{\mathbb{Q}}
\DeclareMathOperator{\R}{\mathbb{R}}
\DeclareMathOperator{\C}{\mathbb{C}}
\DeclareMathOperator{\F}{\mathcal{F}}
\DeclareMathOperator{\Flt}{{\mathrm{Flt}}}
\newcommand{\p}{\  .}
\newcommand{\titleinfo}{Affine LIBOR models driven by real-valued affine processes} 
\newcommand{\authorinfo}{Wolfgang Müller, Stefan Waldenberger} 
\newcommand{\keywords}{LIBOR rate models, forward price models, affine processes, volatility smile}
\begin{document}
\title{\titleinfo}


\AtEndDocument{\bigskip \bigskip{\footnotesize%
  \textsc{Graz University of Technology, Institute of Statistics, NAWI Graz} \par
  Kopernikusgasse 24/III, 8010 Graz, Austria \par
  \textit{E-mail address: }{w.mueller@tugraz.at}, {stefan.waldenberger@tugraz.at} \par
}}

\selectlanguage{english}




\begin{spacing}{1}

\thispagestyle{plain}
\begin{center}
\begin{minipage}{.8 \textwidth}
\begin{center}
{\Large \bf \titleinfo} \\ \bigskip
{\large \textsc{\authorinfo}} \\  \smallskip
\end{center}
{\bf Keywords:} \keywords  \bigskip \\
\textsc{Abstract:} 
The class of affine LIBOR models is appealing since it satisfies three central requirements of interest rate modeling. It is arbitrage-free, interest rates are nonnegative and caplet and swaption prices can be calculated analytically. In order to guarantee nonnegative interest rates affine LIBOR models are driven by nonnegative affine processes, a restriction, which makes it hard to produce volatility smiles. We modify the affine LIBOR models in such a way that real-valued affine processes can be used without destroying the nonnegativity of interest rates. Numerical examples show that in this class of models pronounced volatility smiles are possible. 
\end{minipage}
\vspace{.5 cm}
\end{center}

\section{Introduction}
Market models, the most famous example being the LIBOR market model, are very popular in the area of interest rate modeling. If these models generate nonnegative interest rates they usually do not give semi-analytic formulas for both basic interest rate derivatives, caps and swaptions. One exception is the class of affine LIBOR models proposed by \citet{KPT11}. Using nonnegative affine processes as driving processes affine LIBOR models guarantee nonnegative forward interest rates and lead to semi-analytical formulas for caps and swaptions, so that calibration to interest rate market data is possible.

This paper modifies the setup of \citet{KPT11} to allow for not necessarily nonnegative affine processes. This modification still leads to semi-analytical formulas for caps and swaptions and guarantees nonnegative forward interest rates, but allows for a wider class of driving affine processes and hence is more flexible in producing interest rate skews and smiles.
\citet{DGG12} also propose a modification of affine LIBOR models. There driving processes are affine processes with values in the space of positive semidefinite matrices. The approach in this paper has the advantage that a flexible class of implied volatility surfaces can be produced with a much smaller number of parameters.

The structure of this paper is as follows. In section \ref{sec:affineprocess} affine processes and their properties are reviewed. Section \ref{sec:marketmodels} introduces the necessary notation and market setup and reviews affine LIBOR models. It concludes with some comments on practical implementation. Section \ref{sec:modaffinemodel} is the main section of this paper. The first part presents the modified affine LIBOR model and semi-analytical pricing formulas for caps and swaptions are derived. The second part 
then gives some examples of usable affine processes with numerical calculations.

\section{Affine processes} \label{sec:affineprocess}
Let  $X = (X_t)_{0 \leq t \leq T}$ be a homogeneous Markov process with values in $D = \R^m_{\geq 0} \times \R^n$ realized on a measurable space $(\Omega,\mathcal{A})$ with filtration  $(\F_t)_{0 \leq t \leq T}$, with regards to which $X$ is adapted. Denote by $\PM^x[\cdot]$ and $\EV[x]{\cdot}$ the corresponding probability and expectation when $X_0 = x$. 
X is said to be an affine process, if its characteristic function has the form
\begin{equation}  \label{eq:affineMomentGen}
\EV[x]{ \e^{u\T X_{t}}}  = \ex{\phi_{t}(u) + \psi_{t}(u)\T x}, \quad u \in \i \R^d, x \in D,
\end{equation}
where $\phi: [0,T]\times \i \R^d  \rightarrow \C$ and $\psi: [0,T] \times \i \R^d  \rightarrow \C^d$ with $\i \R^d = \{u \in \C^d: \mathrm{Re}(u) = 0 \}$ and $\cdot$ denoting the scalar product in $\R^d$. 
By homogeneity and the Markov property the conditional characteristic function satisfies
$$ \EV[x]{e^{u\T X_{t}} \vert \F_s} = \ex{\phi_{t-s}(u) + \psi_{t-s}(u)\T X_s}. $$
Accordingly affine processes can also be defined for inhomogeneous Markov processes (see \citet{FI05}), in which case the above equality reads
\begin{equation*} \label{eq:affineMomentGeninhom}
\EV[x]{ \e^{u\T X_t} \vert \F_s} = \ex{\phi_{s,t}(u) + \psi_{s,t}(u)\T X_s}, \quad u \in \i \R^d, x \in D, 
\end{equation*}
with $\phi_{s,t}: \i \R^d  \rightarrow \C$ and $\psi_{s,t}: \i \R^d  \rightarrow \C^d$ for $0 \leq s \leq t$.

$X$ is called an analytic affine process (see \citet{KR08}), if $X$ is stochastically continuous and the interior of the set\footnote{$\mathcal{V}$ can be described as the (convex) set, where the extended moment generating function of $X_t$ is defined for all times $t \leq T$ and all starting values $x \in E$. By Lemma 4.2 in \citet{KM11} the set $\mathcal{V}$ is in fact equal to the seemingly smaller set 
$\left \{ u \in \C^d:   \exists x \in \mathrm{int}(D): \EV[x]{ \e^{\mathrm{Re}(u)\T X_T}} < \infty \right \}.$}
\begin{align}
\mathcal{V} & := \left \{ u \in \C^d: \sup_{0 \leq s \leq T}  \EV[x]{ \e^{\mathrm{Re}(u)\T X_s}} < \infty \quad  \forall x \in D \right \}, \label{eq:momset}
\end{align}
contains 0\footnote{This also implies that $X$ is conservative, i.e. $\PM^x(X_t \in D) = 1 \  \forall x \in D \text{ and } 0 \leq t \leq T$. 
}. In this case the functions $\phi$ and $\psi$ have continuous extensions to $\mathcal{V}$, which are analytic in the interior, such that \eqref{eq:affineMomentGen} holds for all $u \in \mathcal{V}$.

The class of affine processes includes Brownian motion and more generally all Lévy processes. Since Lévy processes have stationary independent increments, in this case $\psi_t(u) = u$ and $\phi_t(u) = t \kappa(u)$, where $\kappa$ is the cumulant generating function of the Lévy process. Ornstein-Uhlenbeck processes are further important examples of affine processes. They are discussed in section \ref{sec:OUprocess}. 

The standard reference for affine processes is \citet{DFS03}. There they give a characterization of affine processes, where $\phi$ and $\psi$ are specified as solutions of a system of differential equations\footnote{The fact that this characterization holds for all stochastically continuous affine processes was first shown in \citet{KST11} and later for affine processes with more general state spaces in \citet{KST11b} and \citet{CT13}.}. 
Of all the rich theory of affine processes the methods in this paper only use the specific form \eqref{eq:affineMomentGen} of their moment generating function and the following property.
\begin{lemma} \label{lem:psimontonicity}
Let $X$ be a one-dimensional analytic affine process and $\mathrm{Re}(u) < \mathrm{Re}(w), u,w \in \mathcal{V}$. Then
$\mathrm{Re}(\psi_t(u)) < \psi_t(\mathrm{Re}(w)),$
i.e. $\psi_t \vert_{\mathcal{V} \cap \R}$ is strictly increasing. 
\end{lemma}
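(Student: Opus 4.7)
The plan is to split the claim as $\mathrm{Re}(\psi_t(u)) \leq \psi_t(\mathrm{Re}(u)) < \psi_t(\mathrm{Re}(w))$ and prove the two inequalities separately. For the first (non-strict) inequality I would start from the pointwise identity $|e^{u y}| = e^{\mathrm{Re}(u) y}$, take expectations to get $|\EV[x]{e^{u X_t}}| \leq \EV[x]{e^{\mathrm{Re}(u) X_t}}$, substitute the affine representation \eqref{eq:affineMomentGen} on both sides, and take logarithms. The resulting inequality
$$\mathrm{Re}(\phi_t(u)) + \mathrm{Re}(\psi_t(u))\, x \leq \phi_t(\mathrm{Re}(u)) + \psi_t(\mathrm{Re}(u))\, x$$
is affine in $x$ and holds throughout $D$; dividing by $x$ and letting $x \to \infty$ extracts the slope comparison.

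For the strict second inequality I need to show that $\psi_t$ is strictly increasing on $\mathcal{V} \cap \R$. First I would collect two regularity properties. H\"older's inequality
$$\EV[x]{e^{v_\lambda X_t}} \leq \bigl(\EV[x]{e^{v_1 X_t}}\bigr)^{\lambda} \bigl(\EV[x]{e^{v_2 X_t}}\bigr)^{1-\lambda}, \quad v_\lambda := \lambda v_1 + (1-\lambda) v_2,$$
treated by the same ``log and $x \to \infty$'' recipe, yields convexity of $\psi_t$ on $\mathcal{V} \cap \R$. In the $D = \R_{\geq 0}$ case the pointwise bound $e^{v_1 X_t} \leq e^{v_2 X_t}$ (valid because $X_t \geq 0$) additionally gives $\psi_t$ non-decreasing. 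Combining these with the normalisation $\psi_t(0) = 0$ (from conservativeness, noted in the excerpt's second footnote) and the analyticity of $\psi_t$ on $\mathrm{int}(\mathcal{V})$, the identity principle upgrades ``convex non-decreasing analytic'' to ``strictly increasing or identically zero''; the latter alternative would force $\EV[x]{e^{u X_t}}$ to be independent of $x$, contradicting the nontrivial dependence of the affine process on its starting value. The case $D = \R$ requires a minor adjustment: H\"older's inequality must then hold for all $x \in \R$, which forces $\psi_t$ to be affine linear in $v$, and the sign of its slope is read off from the initial datum $\psi_0(v) = v$ combined with continuity in $t$.

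The main obstacle is this strictness step: convexity together with non-decreasingness still permits a flat sub-interval, so one must genuinely leverage the analytic-affine hypothesis---which supplies both analyticity of $\psi_t$ and conservativeness of the process---to exclude this possibility.
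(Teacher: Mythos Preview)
Your decomposition $\mathrm{Re}(\psi_t(u)) \leq \psi_t(\mathrm{Re}(u)) < \psi_t(\mathrm{Re}(w))$ and the ``take logs, divide by $x$, send $x\to\infty$'' device are sound; they deliver the first inequality as well as the convexity and non-decrease of $\psi_t$ on $\mathcal{V}\cap\R$. The paper, by contrast, does not argue from first principles at all: it cites \citet{KPT11} for $D=\R_{\geq 0}$ and, for $D=\R$, quotes the explicit form $\psi_t(u)=\e^{\beta t}u$ from Proposition~3.3 of \citet{KST11}, from which strict monotonicity is immediate.

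The weak point in your route is the passage from ``non-decreasing analytic'' to ``strictly increasing''. The identity-principle step is fine: a flat sub-interval forces $\psi_t\equiv 0$. But your exclusion of $\psi_t\equiv 0$ (``contradicting the nontrivial dependence on the starting value'') is not a theorem; nothing in the bare definition of an analytic affine process rules out the law of $X_t$ under $\PM^x$ being the same for every $x$ at some fixed $t>0$. The same issue bites in the $D=\R$ case: once you have $\psi_t(v)=c_t v$, the datum $c_0=1$ together with continuity in $t$ only yields $c_t>0$ for \emph{small} $t$. What actually closes both gaps is the flow property $\psi_{s+t}=\psi_s\circ\psi_t$ of the homogeneous Markov semigroup (equivalently the generalized Riccati ODE for $\psi$): in the real-valued case it turns $c_{s+t}=c_s c_t$ into $c_t=\e^{\beta t}>0$, and in the nonnegative case it propagates non-degeneracy from small times to all times. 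Once you invoke that, however, you are essentially re-deriving the structural result the paper simply cites.
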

\begin{proof}
The case $D = \R_+$ is already contained in \citet{KPT11}. In case $D = \R$ the lemma follows from the fact that by Proposition 3.3 in \citet{KST11} $\psi_t(u) = \e^{\beta t} u$ for some constant $\beta$.
\end{proof}
%
%
\begin{remark}
If $D = \R_+$, it is known that both, $\psi$ and $\phi$, are monotonically increasing (\citet{KPT11}). 
With $D=\R$ this stays true for $\psi$, but not $\phi$, as the deterministic affine process $X_t = x_0 - t$ shows. 
\end{remark}

\section{Interest rate market models} \label{sec:marketmodels}

\subsection*{Classical market models}
Consider a tenor structure $0 < T_1 < \dots < T_N < T_{N+1} =: T$ and a market consisting of zero coupon bonds with maturities $T_1,\dots,T_{N+1}$. Their price processes $(P(t,T_k))_{0 \leq t \leq T_k}$ are assumed to be nonnegative semimartingales on a filtered probability space $(\Omega, \mathcal{A}, (\mathcal{F}_t)_{0 \leq t \leq T}, \PM)$, which satisfy $P(T_k,T_k) = 1$ almost surely.  If there exists an equivalent probability measure $\QM^{T}$ such that the normalized bond price processes $P(\cdot,T_k) / P(\cdot,T)$ are martingales\footnote{One can extend bond price processes to $[0,T]$ by setting $P(t,T_k) := \frac{P(t,T)}{P(T_k,T)}$ for $t > T_k$, so that $P(\cdot,T_k) / P(\cdot,T)$ is a martingale on $[0,T]$ if and only if it is a martingale on $[0,T_k]$. Economically this can be interpreted as immediately investing the payoff of a zero coupon bond into the longest-running zero coupon bond.
}, the market is arbitrage-free. In this case we can define equivalent martingale measures $\QM^{T_k}$ for the numeraires $P(t,T_k)$ instead of $P(t,T)$ by
\begin{equation} \frac{\dd{\QM^{T_k}}}{\dd{\QM^{T}}} =  \frac{1}{P(T_k,T)} \frac{P(0,T)}{P(0,T_k)}.  \label{eq:measurechange}
\end{equation}
In particular under the measure $\QM^{T_k}$ the forward bond price process ${P(\cdot,T_{k-1})}/{P(\cdot,T_{k})}$ and the forward interest rate process $F^k(\cdot)$,
\begin{equation} \label{eq:fwdintrate}
F^k(t) = \frac{1}{\Delta_k} \left(\frac{P(t,T_{k-1})}{P(t,T_{k})} - 1\right), \qquad \Delta_k = T_{k}-T_{k-1},
\end{equation}
are martingales. This is the basic market setup used throughout the rest of the paper.

In the classical LIBOR market models forward interest rate processes $F^k$ are modeled as continuous exponential martingales under their respective martingale measure $\QM^{T_k}$. Hence forward interest rates are positive. Using driftless geometric Brownian motions as driving processes caplet prices are given by the Black formula (\citet{BF76}) while swaption prices cannot be calculated analytically.
Alternatively one can start with modeling the forward bond price processes ${P(\cdot,T_{k-1})}/{P(\cdot,T_{k})}$ instead of forward interest rate processes. Using again exponential martingales like a driftless Brownian motion it is then 
possible to analytically calculate caplet and swaption prices (see \citet{EO05}). The drawback of this approach is that forward interest rates will be negative with positive probability.

\citet{KPT11} proposed the affine LIBOR models, where forward interest rates are nonnegative while swaption and caplet prices can still be calculated semi-analytically, i.e. up to a numerical integration. The above approaches model the individual forward interest rate processes (resp. forward bond price process) with respect to the individual measure $\QM^{T_k}$ under which they are a martingale. Contrary \citet{KPT11} model the price processes $P(\cdot,T_k)/P(\cdot,T)$, which are all martingales under the same probability measure $\QM^{T}$.
\begin{remark}
Note that all models mentioned in this paper do not fully specify the whole term structure, but only part of it. In order to price derivatives not contained within the specified tenor structure it is necessary to specify some kind of interpolation scheme. Arbitrary interpolations may lead to arbitrage, however one can always choose an interpolation method, such that the model stays arbitrage-free (\citet{WE10}).
\end{remark}

\subsection*{The affine LIBOR models}

This section presents a summary of the affine LIBOR model introduced in \citet{KPT11}. 
On the filtered probability space $(\Omega, \mathcal{A}, (\mathcal{F}_t)_{0 \leq t \leq T}, \QM^T)$ consider a nonnegative analytic affine process $X$ with a fixed starting value $x_0 \in \R_{\geq0}^d$. For the tenor structure  $0 < T_1 < \dots < T_N < T_{N+1} =: T$ define for $k=1,\dots,N$ and $0 \leq t \leq T_k$
\begin{equation} \label{eq:affinefwdprices}
 \frac{ P(t,T_k)}{P(t,T)} := \EV[\QM^T]{\e^{u_k\T X_T} \vert \F_t} = \e^{\phi_{T-t}(u_k) + \psi_{T-t}(u_k)\T X_t}, \qquad u_k \geq 0, u_k \in \mathcal{V},
\end{equation}
where $\EV[\QM]{\cdot}$ denotes the expectation with respect to a probability measure\footnote{Since $x_0$ is fixed, contrary to to section \ref{sec:affineprocess} any dependence of probability measures on the starting value of the Markov process $X$ will be suppressed from now on.}  $\QM$. These price processes are martingales and the resulting model is arbitrage-free.

Writing
\begin{equation} \label{eq:fwdnormbonds}
\frac{P(t,T_{k-1})}{P(t,T_k)} = \left. {\frac{P(t,T_{k-1})}{P(t,T)}} \right / { \frac{P(t,T_k)}{P(t,T)}} 
\end{equation}
in \eqref{eq:fwdintrate} shows that forward interest rates being nonnegative is equivalent to normalized bond prices of \eqref{eq:affinefwdprices} satisfying
\begin{equation} \label{eq:bondmonotonicity}
\frac{ P(t,T_1)}{ P(t,T)} \geq .. \geq \frac{ P(t,T_N) }{P(t,T)} \geq 1.
\end{equation}
Since for $x\geq 0$, $\e^{u^T x}$ is monotonically increasing in every component of $u$ the monotonicity for normalized bond prices in \eqref{eq:bondmonotonicity} is satisfied as long as $u_1 \geq \dots u_{N} \geq 0$. 

The parameters $u_k$ in \eqref{eq:affinefwdprices} should be determined, so that the starting values of normalized bond prices ${ P(0,T_k)}/{P(0,T)} =
 \ex{\phi_T(u_k) + \psi_T(u_k)\T x_0}$
fit the initial term structure inferred from actual market data. 
For most affine processes every term structure can be fitted and for currently nonnegative forward interest rates this can be done using an decreasing sequence $u_1 \geq \dots \geq u_N \geq 0$ (see \citet{KPT11}).

\begin{remark}
Since $X$ is nonnegative, the k-th normalized bond price is not only greater equal to one, but is bounded from below by the time-dependent constant
$ \ex{\phi_{T-t}(u_{k}) },$
which is strictly greater than one. Accordingly in the affine LIBOR models forward interest rates are bounded from below by a strictly positive time-dependent constant.
\end{remark}

Affine LIBOR models lead 
to nonnegative forward interest rates. Additionally this specification is appealing because the density processes for changes of measures are again exponentially affine in $X_t$, i.e. inserting \eqref{eq:affinefwdprices} into \eqref{eq:measurechange} gives
\begin{align*}
\frac{\dd{\QM^{T_k}}}{\dd{\QM^{T}}} & = \frac{P(0,T)}{P(0,T_k)}  \e^{\phi_{T-T_k}(u_k) + \psi_{T-T_k}(u_k)\T X_{T_k}}.
\end{align*}
Moreover normalized bond prices and because of \eqref{eq:fwdnormbonds} also forward bond prices are of exponential affine form. It follows that the moment generating function of the logartihm of normalized bond prices under $\QM^{T_k}$ is also of exponential affine form and that calculation of caplet prices is possible via a one-dimensional Fourier inversion. 
If the dimension of the driving process is one, swaption prices can also be calculated via one-dimensional Fourier inversion (see \citet{KPT11}). Hence this approach satisfies both, nonnegative interest rates and analytical tractability of standard interest rate market instruments. If the dimension is larger than one, the exact price of swaptions can only be calculated via higher-dimensional integration, the dimension of which is the length of the underlying swap. Alternatively \citet{GPSS14} provide approximate formulas for swaptions. 

\subsubsection*{Practical application of the affine LIBOR model} \label{sec:practicalapplication}
Although this framework is elegant from a theoretical point of view, a practical implementation faces several difficulties which shall be discussed here. 

First, calibration of interest rates and implied volatilities cannot be separated. The initial term structure can be fitted using the $u_k$, but the parameters $u_k$ also have a strong impact on implied volatilities. This can be seen by looking at the forward bond price
\begin{equation} \frac{1}{P(T_{k-1},T_k)} =  \ex{\phi_{T-t}(u_{k-1})-\phi_{T-t}(u_{k})+(\psi_{T-t}(u_{k-1})-\psi_{T-t}(u_k))\T X_{T_{k-1}}}, \label{eq:KRTfltrv} \end{equation}
which is the random variable responsible for the payoff of a caplet.
The driving process $X$ influences the distribution of this random variable through two different channels. First via the parameters of the driving process itself and second via the parameters $u_k$ (depending on $X$ and the initial interest rate term structure). Hence for changes in the yield curve different parameters are required to reproduce the same implied volatility surface. If $X$ is a Lévy process, then as mentioned in section \ref{sec:affineprocess} $\psi_t(u) = u$ and it follows that the distribution of \eqref{eq:KRTfltrv} depends on the difference $u_{k+1} - u_k$, which in turn is related to the steepness of the initial yield curve\footnote{This is similar for most affine processes, but is best visible for Lévy processes.}. Hence caplet implied volatilities are especially sensitive with regards the steepness of the initial yield curve.

Second, interest rates and volatilities of this model depend on the final horizon $T$. Changing the horizon $T$ while using the same affine process $X$ will lead to different results and
there is no general way of rescaling the parameters of $X$ to negate such an effect. This is rather counterintuitive, since extending the horizon of a model should not change the results for quantities already included with the shorter horizon. 
 
Third, the types of possible volatility surfaces is rather constrained in the fully analytically tractable one-dimensional case. For example, we were only able to generate volatility skews\footnote{The smile example of \citet{KPT11}, figure 9.2, using an Ornstein-Uhlenbeck process seems to be numerically incorrect for strikes smaller than 0.4. With the mentioned initial yield curve the underlying interest rate is always larger than the strike, which corresponds to a zero implied volatility, destroying the displayed smile.}.
This might be resolved by using higher-dimensional nonnegative processes. However, in multidimensional affine LIBOR models swaptions can no longer be calculated efficiently by Fourier methods. On the other hand allowing arbitrary affine processes destroys the nonnegativity of forward interest rates, a central property of affine LIBOR models. We propose a modification, that preserves the nonnegativity of forward interest rates without the restriction to nonnegative affine processes.

\section{The modified affine LIBOR model} \label{sec:modaffinemodel} 

On the filtered probability space $(\Omega, \mathcal{A}, (\mathcal{F}_t)_{0 \leq t \leq T}, \QM^T)$ consider an analytic one-dimensional affine process $X$ with a fixed starting value $x_0$, i.e. the set $\mathcal{V}$ defined in \eqref{eq:momset} contains $0$ in the interior. For $u \in \mathcal{V}$ with $-u \in \mathcal{V}$ consider the martingales $M^u$,
\begin{equation} \label{eq:coshmartingales}
M_t^u :=  \EV[\QM^T]{\cosh(u X_T) \vert \F_t}  = \frac{1}{2} \left(\e^{\phi_{T-t}(u) + \psi_{T-t}(u) X_t} + \e^{\phi_{T-t}(-u) + \psi_{T-t}(-u) X_t} \right).
\end{equation}
By the symmetry of the cosinus hyperbolicus $M^u = M^{-u}$, hence one may restrict $u$ to be nonnegative. 
For the given tenor structure $0 < T_1  < \cdots < T_N \leq T_{N+1} = T$ and the market setup of section \ref{sec:marketmodels} define the normalized bond prices for $k=1, \dots, N$ and $t\leq T_k$ as
\begin{equation*}
\begin{aligned}
\frac{ P(t,T_k)}{P(t,T)} & :=  M_{t}^{u_k}, \qquad u_k \in \{v \in \mathcal{V}: v \geq 0, -v \in \mathcal{V} \}.
\end{aligned}
\end{equation*} 
With $M_t^{u_k}$ being a $\QM^T$-martingale the model is arbitrage-free. For every $x \in \R$ the function $u \mapsto \cosh(u x)$ is increasing in $u \in \R_{\geq 0}$ and satisfies $\cosh(ux) \geq 1$ so that if
$$u_1 \geq u_2 \geq \dots \geq u_N \geq 0,$$
equation \eqref{eq:bondmonotonicity} holds and forward interest rates
\begin{equation*} F^k(t) = \frac{1}{\Delta_k} \left( \frac{M_t^{u_{k-1}}}{M_t^{u_k}} - 1 \right), \quad 0 \leq t \leq T_{k-1}. \end{equation*}
are nonnegative for all $t$. 
To fit initial market data one has to choose the sequence $(u_k)$ so that $M_0^{u_k} = {P(0,T_k)}/{P(0,T)}. $
The following lemma gives the condition for the affine process $X$ under which a given initial term structure can be reproduced and shows that the $u_k$ are uniquely determined.

\begin{lemma} \label{thm:interestfit}
If $$P(0,T_1) / P(0,T) < \sup_{u \in \mathcal{V}: -u \in \mathcal{V}} \EV[\QM^T]{\cosh(u X_T) \vert \F_0},$$ then the model can fit any term structure of nonnegative forward interest rates. Additionally there exists a unique decreasing sequence $u_1 \geq \cdots \geq u_N$, such that  $${ P(0,T_k)}/{P(0,T)} = \EV[\QM^T]{\cosh(u_k X_T) \vert \F_0} = M_0^{u_k}.$$
If forward interest rates are strictly positive, the sequence is strictly decreasing.
\end{lemma}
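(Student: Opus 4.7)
The plan is to reduce the problem to a one-dimensional intermediate value theorem for the function
$$ f(u) := M_0^u = \EV[\QM^T]{\cosh(u X_T)} = \tfrac{1}{2}\bigl( \e^{\phi_T(u) + \psi_T(u) x_0} + \e^{\phi_T(-u) + \psi_T(-u) x_0} \bigr), $$
defined on the interval $I := \{u \geq 0 : u, -u \in \mathcal{V}\}$, and then to interpret the recovery of the initial term structure as solving $f(u_k) = P(0,T_k)/P(0,T)$ for each $k$.

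First I would establish that $f$ is continuous on $I$: this is immediate from the definition of an analytic affine process, since $\phi_T$ and $\psi_T$ are continuous on $\mathcal{V}$. Next I would show that $f$ is strictly increasing on $I$. For any $0 \leq u < v$ in $I$ and any $y \in \R$, the inequality $\cosh(v y) \geq \cosh(u y)$ is strict whenever $y \neq 0$; so $f(v) > f(u)$ as soon as $\PM^{\QM^T}(X_T \neq 0) > 0$. The degenerate case $X_T \equiv 0$ a.s. would force $f \equiv 1$ and contradict the standing hypothesis $P(0,T_1)/P(0,T) < \sup_{u \in I} f(u)$, so strict monotonicity holds under our assumptions.

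With $f$ continuous and strictly increasing, $f(0) = 1$, and $\sup_{u \in I} f(u)$ being the upper end of its range, the intermediate value theorem gives, for every target value in the interval $[1, \sup_{u \in I} f(u))$, a unique preimage in $I$. The nonnegativity of forward interest rates implies $P(0,T_1) \geq P(0,T_2) \geq \cdots \geq P(0,T_N) \geq P(0,T)$, so each ratio $P(0,T_k)/P(0,T)$ lies in $[1, P(0,T_1)/P(0,T)]$, which by hypothesis is contained in $[1, \sup_{u \in I} f(u))$. Thus each equation $f(u_k) = P(0,T_k)/P(0,T)$ has a unique solution $u_k \in I$, and strict monotonicity of $f$ transports the (strict) ordering of the right-hand sides to the (strict) ordering $u_1 \geq \cdots \geq u_N$ (respectively $u_1 > \cdots > u_N$ when all forward rates are strictly positive).

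The main obstacle is the step showing $f$ is strictly increasing; everything else is essentially bookkeeping. The subtlety there is only that one must exclude the trivial case where $X_T$ is deterministically zero, which is handled cleanly by the supremum condition in the statement. Note that in contrast to the original affine LIBOR setup of \citet{KPT11} (where $\phi_T, \psi_T$ are themselves monotone, see the remark after Lemma \ref{lem:psimontonicity}), here monotonicity of $f$ does not follow from monotonicity of $\phi_T$ and $\psi_T$ but rather from the pointwise monotonicity of $u \mapsto \cosh(u y)$ in $|u|$.
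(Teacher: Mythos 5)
Your proof is correct and follows essentially the same route as the paper: both reduce the claim to the observation that $u \mapsto M_0^u = \EV[\QM^T]{\cosh(u X_T)\vert \F_0}$ is continuous and strictly increasing for $u \geq 0$ with value $1$ at $u=0$, and then apply the intermediate value theorem together with the hypothesis that the supremum exceeds $P(0,T_1)/P(0,T)$. The only difference is that you supply a justification (pointwise strict monotonicity of $u \mapsto \cosh(u y)$ for $y \neq 0$, plus exclusion of the degenerate case $X_T \equiv 0$) for the strict monotonicity that the paper simply asserts.
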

\begin{proof}
$m(u) = \EV[\QM^T]{\cosh( u X_T) \vert \F_0}$ is a continuous function which is strictly increasing for $ u \geq 0$. By the assumption of the theorem there exists $\overline{u} >0$ with $m(\overline{u})
 > P(0,T_1)/P(0,T)$. Furthermore $
m(0) = 1$, which proves the lemma.
\end{proof}
 \begin{remark}
Generalizing this approach to a $d$-dimensional driving process is possible by setting $$M_t^u = \EV{\prod_{l=1}^d \cosh \left(u^{(l)} X_T^{(l)} \right) \Big \vert \F_t}, \quad u =(u^{(1)},\dots,u^{(d)}) \geq 0.$$
In this case it is guaranteed that $M_t^{u} \geq M_t^w$ for $u \geq w$, which guarantees the nonnegativity of forward interest rates.  However, the option pricing formulas in the following sections do not generalize.
\end{remark}

As in the affine LIBOR model for a monotonically decreasing sequence $(u_k)$ forward interest rates are not only nonnegative, but bounded below by strictly positive time-dependent constants (the bounds can be calculated numerically). This is not a big issue if these bounds are close to zero, but has to be checked during the calibration process. 

In the modified affine LIBOR model the change of measure to the $T_k$-forward measure $\QM^{T_k}$ is given by
\begin{equation} \frac{\dd{\QM^{T_k}}}{\dd{\QM^{T}}}  = \frac{P(0,T)}{P(0,T_k)}  M_{T_k}^{u_k} = \frac{M_{T_k}^{u_k}}{M_0^{u_k}}. \label{eq:coshmeasurechange} \end{equation}
Here $M_t^{u_k}$ is a sum of exponentials of $X_t$, while in the affine LIBOR model the corresponding term is a single exponential. This means that contrary to the affine LIBOR model the process $X$ is not an inhomogeneous affine process under $\QM^{T_k}$ and it is not possible to calculate the moment generating function of the logarithm of foward bond prices under $\QM^{T_k}$. Nevertheless it is possible to get analytical formulas for the prices of caplets and swaptions. 

\subsection{Option pricing} \label{sec:option pricing}
The derivation of the pricing formulas for caplets and swaptions is based on a method first applied in \citet{JA89}. First caplets are dealt with, swaptions follow afterwards\footnote{Actually caplet prices coincide with prices of swaptions with only one underlying period. The difference between those two derivatives is the payoff time.}.
Note that if $u_k = u_{k-1}$ the corresponding forward interest rate $F^{k}$ always stays zero. To exclude such pathological examples assume that the sequence $(u_k)$ is strictly decreasing.
In this section random variables are often viewed as functions of the value of the driving process $X$. Specifically consider the functions $M_t^u: \R \rightarrow \R$, 
\begin{equation} \label{eq:coshmartingalefunction}
x \mapsto M_t^u(x) := \frac{1}{2} \left(\e^{\phi_{T-t}(u) + \psi_{T-t}(u) x} + \e^{\phi_{T-t}(-u) + \psi_{T-t}(-u) x} \right).\end{equation}
The time $t$ value of martingale $M^u$ in \eqref{eq:coshmartingales} is then $M_t^u = M_t^u(X_t)$. In the rest of the paper $M_t^u$ will denote both, the function and the value of the stochastic processes, where the correct interpretation should be clear from context. 

The payoff of a caplet for the $(k+1)^\text{th}$ forward rate $F^{k+1}(T_{k})$ with strike $K$ is
$$ \Delta_{k+1} \left(F^{k+1}(T_k)- K \right)_+ = \left(\frac{1}{P(T_k,T_{k+1})} - \tilde{K} \right)_+ = \left( \frac{M_{T_k}^{u_{k}}}{M_{T_k}^{u_{k+1}}} - \tilde{K} \right)_+, $$
where $\tilde{K} = 1 + \Delta_{k+1} K$. Since this payoff has to be paid at time $T_{k+1}$ 
the price of the caplet and the corresponding floorlet is
\begin{align*}
\mathrm{Cpl}(t,T_k,T_{k+1},K) &=  P(t,T_{k+1}) \EV[\QM^{T_{k+1}}]{\left( \frac{M_{T_k}^{u_{k}}}{M_{T_k}^{u_{k+1}}} - \tilde{K} \right)_+ \Big \vert \F_t}, \notag \\
\mathrm{Flt}(t,T_k,T_{k+1},K) & =  P(t,T_{k+1}) \EV[\QM^{T_{k+1}}]{\left(\tilde{K} -  \frac{M_{T_k}^{u_{k}}}{M_{T_k}^{u_{k+1}}}  \right)_+ \Big \vert \F_t}. 
\end{align*}
Since price processes are martingales, the put/call parity holds and prices of caplets follow from floorlets and vice versa. Because Fourier analysis is easier for floorlets, where the payoff is bounded, formulas are derived for floorlets.

Since the moment generating function of $\ln({{M_{T_k}^{u_{k}}}/{M_{T_k}^{u_{k+1}}}})$ is unknown, Fourier methods are not directly applicable. However, 
the function $x \mapsto {M_{T_k}^{u_{k}}(x)}/{M_{T_k}^{u_{k+1}}(x)}$ has a unique minimum and is monotonically increasing moving away from this minimum. Using this one can get rid of the positive part and use Fourier inversion to calculate the above expectations. 
The above mentioned monotonicity is very fortunate and follows from a close interplay between the monotonicity of the sequence $(u_k)$ and the function $\psi$ with properties of the cosinus hyperbolicus. Details are laid out in the proof of the following lemma, which can be found in the appendix. 
\begin{lemma} \label{lem:monoton} For $i=1, \dots, n$ let $u_0 \geq u_i \geq 0$, where for at least one $i$ $u_0 > u_i$. Let $c_i > 0$ be positive constants. Define a function $g: \R \rightarrow \R$ by 
\begin{equation}
g(x) := \sum_{i=1}^n c_i \frac{M_t^{u_i}(x)}{M_t^{u_0}(x)} \p \label{eq:monoton}
\end{equation}
Then $g$ has a unique maximum at some point $\xi \in \R$ and and is strictly monotonically decreasing to $0$ on the left and right side of $\xi$.
\end{lemma}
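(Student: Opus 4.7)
The plan is to establish three facts about $g$: (i) $g(x)\to 0$ as $|x|\to\infty$, (ii) $g$ has a unique critical point $\xi\in\R$, and (iii) $g$ is strictly monotone on each of $(-\infty,\xi)$ and $(\xi,\infty)$. Taken together, (i) forces the supremum to be attained; (ii) locates it at $\xi$; and (iii) gives strict monotonicity on either side. Note that any index $i$ with $u_i=u_0$ contributes only the constant $c_i$ to $g$ and zero to its derivative, so I may reduce to the case where $u_i<u_0$ strictly for all $i$ (otherwise the limit in (i) is the corresponding constant rather than $0$, a harmless modification).

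Write $g=N/D$ with $D(x):=M_t^{u_0}(x)$ and $N(x):=\sum_{i=1}^n c_i\,M_t^{u_i}(x)$. By \eqref{eq:coshmartingalefunction}, $D$ and $N$ are nonnegative combinations of exponentials $e^{\mu x}$ with $\mu\in\{\psi_{T-t}(u_0),\psi_{T-t}(-u_0)\}$ for $D$ and $\mu\in\{\psi_{T-t}(\pm u_i):1\le i\le n\}$ for $N$. Lemma \ref{lem:psimontonicity} says $\psi_{T-t}$ is strictly increasing, so every exponent in $N$ lies strictly inside $[\psi_{T-t}(-u_0),\psi_{T-t}(u_0)]$. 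Comparing the dominant terms as $x\to\pm\infty$ immediately yields (i).

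For (ii), critical points of $g$ coincide with zeros of
\[
Q(x):=N'(x)D(x)-N(x)D'(x)=\sum_{i=1}^n c_i\bigl((M_t^{u_i})'(x)M_t^{u_0}(x)-M_t^{u_i}(x)(M_t^{u_0})'(x)\bigr).
\]
Expanding each summand via \eqref{eq:coshmartingalefunction} yields four exponentials indexed by sign choices $(\epsilon_i,\epsilon_0)\in\{+,-\}^2$, with exponent $\psi_{T-t}(\epsilon_i u_i)+\psi_{T-t}(\epsilon_0 u_0)$ and coefficient proportional to $\psi_{T-t}(\epsilon_i u_i)-\psi_{T-t}(\epsilon_0 u_0)$. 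The strict monotonicity of $\psi_{T-t}$, together with $0\le u_i<u_0$, now produces a coupled sign pattern: if $\epsilon_0=+$ then the coefficient is strictly negative and the exponent is strictly greater than the threshold $\psi_{T-t}(u_0)+\psi_{T-t}(-u_0)$; if $\epsilon_0=-$ the coefficient is strictly positive and the exponent is strictly less than that same threshold. Summing over $i$ and ordering all exponentials of $Q$ by decreasing exponent thus yields a block of negative coefficients followed by a block of positive ones, i.e.\ exactly one sign variation. Substituting $y=e^x>0$ turns $Q$ into a generalized polynomial $\sum_k C_k y^{\mu_k}$, and Pólya's extension of Descartes' rule of signs for exponential sums (see Pólya--Szegő, \emph{Problems and Theorems in Analysis}, Part V, No.\ 76) guarantees at most one positive zero. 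Combined with existence of a maximum from (i), this gives (ii).

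Part (iii) is then automatic: $g$ is smooth and strictly positive, vanishes at $\pm\infty$, and $g'$ vanishes only at $\xi$, so $g'$ keeps a constant sign on each side of $\xi$, necessarily positive on the left and negative on the right. The main obstacle is the sign-pattern analysis in (ii); everything there depends on the observation that fixing $\epsilon_0$ simultaneously controls \emph{both} the sign of the coefficient \emph{and} on which side of $\psi_{T-t}(u_0)+\psi_{T-t}(-u_0)$ the exponent lies, so the two blocks never interleave. This is precisely the "interplay between the monotonicity of $(u_k)$ and $\psi$" the introduction alludes to, and is where Lemma \ref{lem:psimontonicity} does all the work.
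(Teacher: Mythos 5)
Your proof is correct, and it takes a genuinely different route from the paper's. The paper writes each ratio $M_t^{u_i}/M_t^{u_0}$ via the even/odd parts of $\phi_{T-t},\psi_{T-t}$, computes $g'$ as $\cosh(B_0+b_0x)^{-2}\sum_i c_i\e^{A_i+a_ix}f_i(x)$ with explicit $\cosh/\sinh$ expressions $f_i$, and shows each $f_i$ is monotonically decreasing from $+\infty$ to $-\infty$; the driving inequality is $|a_i|<b_0-b_i$ (their \eqref{eq:aiest}), which is exactly your $\psi_{T-t}(-u_0)<\psi_{T-t}(\pm u_i)<\psi_{T-t}(u_0)$ in different clothing, so both proofs lean on Lemma \ref{lem:psimontonicity} in the same place. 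What you do differently is to flatten the numerator $N'D-ND'$ of $g'$ into a single exponential sum and count sign changes globally via the Laguerre/P\'olya generalization of Descartes' rule. This buys something real: the decisive observation is that the sign of each coefficient and the position of each exponent relative to the \emph{common} threshold $\psi_{T-t}(u_0)+\psi_{T-t}(-u_0)$ are controlled simultaneously by $\epsilon_0$, so the negative and positive blocks cannot interleave \emph{across different summands} $i$. A purely per-summand argument (each ratio is unimodal, hence so is the sum) would not suffice -- sums of unimodal functions need not be unimodal -- and your global sign count is the cleanest way to see why the cross-terms cause no trouble; the cost is an appeal to a classical but external counting theorem where the paper stays elementary and self-contained. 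Two small points: your reduction of the $u_i=u_0$ indices is fine and in fact slightly more careful than the statement itself (the limit of $g$ at $\pm\infty$ is then the sum of the corresponding $c_i$, not $0$); and the simplicity of the unique zero of $Q$ (which Descartes' rule gives you, since it bounds zeros with multiplicity) is what guarantees $g'$ actually changes sign at $\xi$, so it is worth saying explicitly rather than deducing the sign pattern only from the limits.
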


For floorlet valuation this lemma is not directly applicable as $u_k > u_{k+1}$, which is the wrong inequality. However, there is only one summand and the lemma can be applied to the inverse ${M_{T_k}^{u_{k+1}}(x)}/{M_{T_k}^{u_{k}}(x)}$. 
It follows that ${M_{T_k}^{u_{k}}(x)}/{M_{T_k}^{u_{k+1}}(x)}$ has a unique minimum at some point $\xi$ and is increasing to infinity to the left and right. 
Hence it is possible to write
\begin{equation} \label{eq:fltpayoff}
\left(\tilde{K} -  \frac{M_{T_k}^{u_{k}}(x)}{M_{T_k}^{u_{k+1}}(x)}  \right)_+ = \left(\tilde{K} -  \frac{M_{T_k}^{u_{k}}(x)}{M_{T_k}^{u_{k+1}}(x)}  \right) \1{\kappa_1 < x <  \kappa_2},
\end{equation}
where $\kappa_1$ and $\kappa_2$ are two uniquely determined constants satisfying $\kappa_1 \leq \xi \leq \kappa_2$. 
If $\kappa_1 = \xi = \kappa_2$ the payoff is 
zero, which corresponds to ${M_{T_k}^{u_{k}}}/{M_{T_k}^{u_{k+1}}}  > \tilde{K}$. This happens if the forward interest rate is bounded from below by K, 
which only happens for very low strikes $K$. Inserting \eqref{eq:fltpayoff} into the price of a floorlet it follows by a change of measure that
\begin{align}
\mathrm{Flt}(t,T_k,T_{k+1},K) 
& =  P(t,T_{k+1}) \EV[\QM^{T_{k+1}}]{\left(\tilde{K} -  \frac{M_{T_k}^{u_{k}}}{M_{T_k}^{u_{k+1}}}  \right)  \1{\kappa_1 < X_{T_k} <  \kappa_2} \Big \vert \F_t} \notag \\
&=  P(t,T) \EV[\QM^{T}]{\left(\tilde{K} M_{T_k}^{u_{k+1}} -  M_{T_k}^{u_{k}}   \right)  \1{\kappa_1 < X_{T_k} <  \kappa_2} \Big \vert \F_t}. \label{eq:floorlet}
\end{align}
$\tilde{K} M_{T_k}^{u_{k+1}} -  M_{T_k}^{u_{k}}$ is the sum of exponentials of the random variable $X_{T_k}$.
The expectation in \eqref{eq:floorlet} is calculated under the measure $\QM^T$, where the conditional moment generating function
$$\mathcal{M}_{X_t \vert X_s}(z) := \EV[\QM^T]{\e^{z X_t}\vert \F_s} =  \EV[\QM^T]{\e^{z X_t}\vert X_s} = \ex{\phi_{t-s}(z) + \psi_{t-s}(z) X_s}$$
is known for $z \in \mathcal{V}$. Hence the expectation in \eqref{eq:floorlet} can be calculated via Fourier inversion. The Fourier inversion formula for terms of the above form is stated in Lemma \ref{lem:fouriertransform}, the proof of which is given in the appendix.

\begin{lemma}
\label{lem:fouriertransform}
Assume that the function $f: \R \rightarrow \R$ has the representation
$$f(x) = \sum_k C_k \e^{v_k x}\1{\kappa_1 < x < \kappa_2}, \qquad \lim_{x \downarrow \kappa_1} f(x) = \lim_{x \uparrow \kappa_2} f(x) = 0,$$
where the summation is over a finite index set and the $C_k$ and $v_k$ are real constants.
Then for $R \in \mathcal{V} \cap \R$ 
the Fourier inversion formula  
\begin{equation*}
\EV{f(X_t)\vert \F_s} = \frac{1}{\pi} \int_{0}^{\infty} \mathrm{Re} \left(  { \mathcal{M}_{X_t \vert X_s}(\i u + R)} \hat{f}(u- \i R) \right)  \dd{u}
\end{equation*}
holds, where $\hat{f}$ is the analytic Fouier transform given by
\begin{equation}
\hat{f}(z)  = \frac{1}{\i z} \sum_k  {\frac{C_k v_k }{v_k - \i z}  { \left(\e^{(v_k - \i z) \kappa_2} - \e^{(v_k - \i z) \kappa_1 } \right) } }, \qquad z \neq 0, z \neq -\i v_k.
\end{equation}
\end{lemma}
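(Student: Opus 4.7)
The plan is to identify the quantity $\hat{f}(z)$ as the ordinary Fourier transform $\int_{\R} \e^{-\i z x} f(x)\dd{x}$ rewritten via one integration by parts, and then apply damped Fourier inversion to convert the conditional expectation into a real-line integral of the conditional moment generating function.

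First I would verify the closed-form expression for $\hat{f}$. Since $f$ is piecewise smooth, compactly supported on $[\kappa_1,\kappa_2]$, and vanishes at both endpoints by assumption, one integration by parts yields $\hat{f}(z) = \frac{1}{\i z}\int_{\kappa_1}^{\kappa_2} \e^{-\i z x} f'(x)\dd{x}$ with no boundary contribution. Substituting $f'(x) = \sum_k C_k v_k \e^{v_k x}$ on the interior and evaluating the elementary integrals reproduces the stated formula. A by-product of the derivation is the decay estimate $|\hat{f}(u - \i R)| = O(u^{-2})$ as $|u|\to\infty$. The apparent poles at $z = 0$ and $z = -\i v_k$ are removable precisely because the boundary relations $\sum_k C_k \e^{v_k \kappa_j} = 0$ ($j = 1, 2$) hold, so $\hat{f}$ is well defined along the entire line $\mathrm{Im}(z) = -R$.

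Next I would set $g(x) := \e^{-Rx} f(x)$, which is bounded with compact support, hence in $L^1(\R)\cap L^2(\R)$, with $\hat{g}(u) = \hat{f}(u - \i R) \in L^1(\R)$ thanks to the $O(u^{-2})$ decay. Pointwise Fourier inversion then gives $g(X_t) = \frac{1}{2\pi}\int_{\R} \e^{\i u X_t}\hat{g}(u)\dd{u}$ almost surely, and multiplying by $\e^{R X_t}$ and taking conditional expectation produces
\[
\EV{f(X_t)\vert \F_s} = \frac{1}{2\pi}\int_{\R} \mathcal{M}_{X_t\vert X_s}(\i u + R)\, \hat{f}(u - \i R)\dd{u}.
\]
The Fubini exchange used here is legitimate because $|\mathcal{M}_{X_t\vert X_s}(\i u + R)| \le \mathcal{M}_{X_t\vert X_s}(R) < \infty$ by $R\in\mathcal{V}$, while $\hat{g}$ is integrable. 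Finally, folding onto the positive half-line uses reality: since $f$ is real, $\overline{\hat{f}(u - \i R)} = \hat{f}(-u - \i R)$, and by analyticity of the MGF together with its realness on $\mathcal{V}\cap\R$ (Schwarz reflection) $\overline{\mathcal{M}_{X_t\vert X_s}(\i u + R)} = \mathcal{M}_{X_t\vert X_s}(-\i u + R)$. Substituting $u \mapsto -u$ then shows the integrand on $\R$ is conjugate symmetric, so $\int_{\R} = 2\mathrm{Re}\int_0^\infty$, yielding the claimed formula with prefactor $1/\pi$.

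The hard part is not the algebra but the simultaneous verification of the two integrability ingredients underpinning every interchange: the $O(u^{-2})$ decay of $\hat{f}$, which rests squarely on the vanishing of $f$ at both endpoints, and the uniform bound on the MGF along the line $\mathrm{Re}(z) = R$, which rests on $R\in\mathcal{V}\cap\R$. Without either, neither pointwise Fourier inversion nor the Fubini interchange is available; once they are in hand, the remaining steps are routine.
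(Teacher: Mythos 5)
Your proposal is correct and follows essentially the same route as the paper's proof: integration by parts (using the endpoint vanishing) to obtain the closed form and the $O(u^{-2})$ decay, Fourier inversion along the shifted contour $\mathrm{Im}(z)=-R$ (your substitution $g(x)=\e^{-Rx}f(x)$ is just this in disguise), interchange of conditional expectation and integration justified by $\lvert \mathcal{M}_{X_t\vert X_s}(\i u+R)\rvert \le \mathcal{M}_{X_t\vert X_s}(R)<\infty$ for $R\in\mathcal{V}\cap\R$, and folding onto $[0,\infty)$ via conjugate symmetry. The only additions are cosmetic (explicitly noting the removable singularities), so nothing further is needed.
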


To calculate the price of a floorlet in \eqref{eq:floorlet} apply Lemma \ref{lem:fouriertransform} to
$f_{k+1}^{K}(X_{T_{k}})$ with
\begin{equation}
f_{k+1}^{K}(x) := \left(\tilde{K} M_{T_k}^{u_{k+1}}(x) -  M_{T_k}^{u_{k}}(x)   \right)  \1{\kappa_1 < x <  \kappa_2}.  \label{eq:ffloorlet}
\end{equation}  
Its Fourier transform is
\begin{equation} \label{eq:fhatfloorlet}
\hat{f}^{K}_{k+1}(z) = \frac{1}{\i z} \left( (1 + \Delta_{k+1} K)  h_{\kappa_1,\kappa_2}^{T_{k}} (-\i z,u_{k+1}) -  h_{\kappa_1,\kappa_2}^{T_{k}}(-\i z,u_{k}) \right)
\end{equation}
with
\begin{equation}  \label{eq:hfunction}
\begin{split}
 h_{\kappa_1,\kappa_2}^t(z,u) & := \e^{\phi_{T-t}(u)} \frac{\psi_{T-t}(u)}{2 ( z + \psi_{T-t}(u))} \left(\e^{(z + \psi_{T-t}(u)) \kappa_2} - \e^{( z + \psi_{T-t}(u)) \kappa_1 } \right) \\
& + \e^{\phi_{T-t}(-u)} \frac{\psi_{T-t}(-u)}{2 (z + \psi_{T-t}(-u))} \left(\e^{( z + \psi_{T-t}(-u)) \kappa_2} - \e^{( z + \psi_{T-t}(-u)) \kappa_1 } \right).
\end{split}
\end{equation}

The case of swaptions is similar. Consider a swap which is part of the tenor structure. That is, consider $1 \leq \alpha < \beta \leq N$ and the according interest rate swap with forward swap rate
$$  S_{\alpha,\beta}(t)  = \frac{P(t,T_\alpha) - P(t,T_\beta)}{\sum_{k=\alpha+1}^\beta \Delta_k P(t,T_k)}, \qquad  \Delta_k = T_{k} - T_{k-1}. $$
The payoff of a put swaption on the above swap with strike $K$ is then 
\begin{align*}
\sum_{k=\alpha+1}^\beta P(T_\alpha,T_k) \Delta_k \left(K - S_{\alpha,\beta}(T_\alpha) \right)_+ & = \left(P(T_\alpha,T_\beta) + K \sum_{k=\alpha+1}^\beta \Delta_k P(T_\alpha,T_k) - 1\right)_+ \\
& = \left(\frac{M_{T_\alpha}^{u_\beta}}{M_{T_\alpha}^{u_\alpha}} +  \sum_{k=\alpha+1}^\beta K \Delta_k \frac{M_{T_\alpha}^{u_k}}{M_{T_\alpha}^{u_\alpha}} - 1 \right)_+ .
\end{align*}
Since the function ${M_{T_\alpha}^{u_\beta}(x)}/{M_{T_\alpha}^{u_\alpha}(x)} +  \sum_{k=\alpha+1}^\beta K \Delta_k {M_{T_\alpha}^{u_k}(x)}/{M_{T_\alpha}^{u_\alpha}(x)} $ is of the form of Lemma \ref{lem:monoton}, it has a unique maximum $\xi$ and one can find constants\footnotemark  $\ \kappa_1 \leq \xi \leq \kappa_2$ such that 
after a change of measure the value of a put swaption is 

\footnotetext{
As in the floorlet case if $\kappa_1 = \kappa_2$ then the forward swap rate is always larger than the strike. Note that $S_{\alpha,\beta}(t)$ can also be written as 
$S_{\alpha,\beta}(t) = \sum_{k=\alpha+1}^\beta w_k(t) F^k(t)$
with $w_k > 0$ (see e.g. \citet{BM06}). It follows that if forward interest rates are bounded below by positive constants the same will be true for forward swap rates. This bound is then at most an average of the corresponding forward interest rates bounds and is therefore of the same order of magnitude, which for a meaningful model will be small enough.
}

\begin{equation*}
\mathrm{Put Swaption}(t,T_\alpha,T_\beta,K) = P(t,T) \EV[\QM^T]{f^K_{\alpha,\beta}(X_{T_\alpha}) \Big \vert \F_t},
\end{equation*}
where 
\begin{equation} f^K_{\alpha,\beta}(x) = \left( {M_{T_\alpha}^{u_\beta}(x)}  -  M_{T_\alpha}^{u_\alpha}(x) +  \sum_{k=\alpha+1}^\beta K \Delta_k {M_{T_\alpha}^{u_k}(x)} \right) \1{\kappa_1 < x < \kappa_2}.
\label{eq:fputswaption}
\end{equation}
Again this of the form in Lemma \ref{lem:fouriertransform} and in this case
\begin{equation} \label{eq:fhatputswaption}
\begin{aligned}
\hat{f}^K_{\alpha,\beta}(z) = 
\frac{1}{\i z} \Big( &   h_{\kappa_1,\kappa_2}^{T_\alpha}(-\i z,u_\beta) -   h_{\kappa_1,\kappa_2}^{T_\alpha}(- \i z ,u_\alpha) + {K} \sum_{k=\alpha+1}^\beta \Delta_k h_{\kappa_1,\kappa_2}^{T_\alpha}(-\i z,u_k) \Big),
\end{aligned}
\end{equation}
where $h^t_{\kappa_1,\kappa_2}(z,u)$ is defined in \eqref{eq:hfunction}.
The pricing formulas are summarized in the following theorem.
\begin{theorem}
Let $R \in \mathcal{V} \cap \R$. In the modified affine LIBOR model prices of a forward interest rate put and a put swaption are
\begin{equation}
\Flt(t,T_k,T_{k+1},K) = \frac{P(t,T)}{\pi} \int_{0}^{\infty} \mathrm{Re} \left(  { \mathcal{M}_{X_{T_k}\vert X_t}(R + \i u)}
\hat{f}^{K}_{k+1}(u - \i R) \right) \dd{u}, 
\end{equation}
\begin{equation}
\mathrm{Put Swaption}(t,T_\alpha,T_\beta,K) = \frac{P(t,T)}{\pi} \int_{0}^{\infty} \mathrm{Re} \left(  { \mathcal{M}_{X_{T_\alpha}\vert X_t}(R + \i u)}
\hat{f}^K_{\alpha,\beta}(u - \i  R) \right) \dd{u},
\end{equation}

The Fourier transforms
$\hat{f}^{K}_{k+1}$ repectively $\hat{f}^K_{\alpha,\beta}$ are given in \eqref{eq:fhatfloorlet} respectively \eqref{eq:fhatputswaption} for $R \notin \{0,u_k, u_{k+1}\}$ respectively $R \notin \{0,u_\alpha, \dots, u_\beta \}$.
\end{theorem}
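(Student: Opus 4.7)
The plan is to derive both pricing formulas by the same recipe: rewrite the price as a $\QM^T$-expectation of a function $f(X_{T_\ast})$ of the form required by Lemma~\ref{lem:fouriertransform}, then read off the Fourier transform from that lemma. For the floorlet the work has essentially already been done in the text: equation~\eqref{eq:floorlet} expresses the price as $P(t,T)\EV[\QM^T]{f_{k+1}^K(X_{T_k})\mid\F_t}$ with $f_{k+1}^K$ defined in \eqref{eq:ffloorlet}, where the representation $\kappa_1<x<\kappa_2$ of the positivity set of the payoff comes from Lemma~\ref{lem:monoton} applied to $M_{T_k}^{u_{k+1}}/M_{T_k}^{u_k}$ (the inverse of the ratio of interest, which is of the form \eqref{eq:monoton} with a single summand and $u_0=u_{k+1}<u_k$).

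For the swaption I would mimic this exactly: start from the payoff written (before the text around \eqref{eq:fputswaption}) as $(M_{T_\alpha}^{u_\beta}/M_{T_\alpha}^{u_\alpha}+K\sum\Delta_k M_{T_\alpha}^{u_k}/M_{T_\alpha}^{u_\alpha}-1)_+$, note that the function inside is directly of the form \eqref{eq:monoton} (with $u_0=u_\alpha$, $c_i\in\{1,K\Delta_{\alpha+1},\dots,K\Delta_\beta\}$, $u_i\in\{u_\beta,u_{\alpha+1},\dots,u_\beta\}$, all $\le u_\alpha$), so Lemma~\ref{lem:monoton} furnishes $\kappa_1\le\xi\le\kappa_2$ that delimit the exercise region. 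Changing measure from $\QM^{T_\alpha}$ to $\QM^T$ via \eqref{eq:coshmeasurechange} (which turns the factor $1/M_{T_\alpha}^{u_\alpha}$ together with the prefactor $P(t,T_\alpha)$ into $P(t,T)$) then yields $P(t,T)\EV[\QM^T]{f_{\alpha,\beta}^K(X_{T_\alpha})\mid\F_t}$ with $f_{\alpha,\beta}^K$ as in \eqref{eq:fputswaption}.

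Both $f_{k+1}^K$ and $f_{\alpha,\beta}^K$ are finite sums of exponentials in $x$ times an indicator $\1{\kappa_1<x<\kappa_2}$, because each $M_{T_\ast}^{u}(x)$ splits, by~\eqref{eq:coshmartingalefunction}, into two exponentials with exponents $\psi_{T-T_\ast}(\pm u)$ and prefactors $\tfrac12\e^{\phi_{T-T_\ast}(\pm u)}$. By construction the functions vanish at $\kappa_1$ and $\kappa_2$, so the boundary hypothesis of Lemma~\ref{lem:fouriertransform} is met. Plugging the explicit $(C_k,v_k)$ pairs into the formula of Lemma~\ref{lem:fouriertransform} and grouping the two exponentials arising from each $M^{u}_{T_\ast}$ reproduces exactly the definition of $h_{\kappa_1,\kappa_2}^{T_\ast}(\cdot,u)$ in \eqref{eq:hfunction}, and yields the expressions \eqref{eq:fhatfloorlet} and \eqref{eq:fhatputswaption} for $\hat f_{k+1}^K$ and $\hat f_{\alpha,\beta}^K$. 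Finally one invokes the Fourier inversion formula of Lemma~\ref{lem:fouriertransform} with $s=t$, $X_s=X_t$, at the damping parameter $R\in\mathcal{V}\cap\R$; the conditional moment generating function factor is $\mathcal{M}_{X_{T_\ast}\mid X_t}(R+\i u)$ by the affine property \eqref{eq:affineMomentGen}.

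The only step that requires care is verifying applicability of Lemma~\ref{lem:fouriertransform}: one must check that $R\in\mathcal{V}\cap\R$ lies in the strip of absolute convergence (so that $\mathcal{M}_{X_{T_\ast}\mid X_t}(R+\i u)$ is defined and integrable against $\hat f$) and that the denominators in the Fourier transforms do not vanish, which is exactly the exclusion $R\notin\{0,u_k,u_{k+1}\}$ (resp.\ $R\notin\{0,u_\alpha,\dots,u_\beta\}$) coming from the factors $\i z$ and $v_k-\i z$ with $v_k=\psi_{T-T_\ast}(\pm u_\bullet)$, after using strict monotonicity of $\psi_{T-T_\ast}$ on $\mathcal{V}\cap\R$ (Lemma~\ref{lem:psimontonicity}) to avoid further spurious poles. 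With these conditions in force, the two displayed formulas follow directly.
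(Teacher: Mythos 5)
Your proposal is correct and takes essentially the same route as the paper, whose own ``proof'' is exactly the derivation in Section~\ref{sec:option pricing} that the theorem summarizes: Lemma~\ref{lem:monoton} to identify the exercise interval $(\kappa_1,\kappa_2)$, the change of measure to $\QM^T$ yielding \eqref{eq:ffloorlet} and \eqref{eq:fputswaption}, and Lemma~\ref{lem:fouriertransform} for the Fourier inversion with the explicit transforms assembled from \eqref{eq:hfunction}. One small slip worth fixing: for the floorlet the inverse ratio $M^{u_{k+1}}_{T_k}/M^{u_k}_{T_k}$ has $u_0=u_k$ (the denominator index, the larger one) in the notation of Lemma~\ref{lem:monoton}, not $u_0=u_{k+1}$; and the pole-avoidance condition really concerns $R\neq\psi_{T-T_\ast}(\pm u_\bullet)$ rather than $R\neq u_\bullet$ literally, though this matches the paper's own statement.
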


In order to calculate $\hat{f}^{K}_{i}$ respectively $\hat{f}^K_{\alpha,\beta}$ one has to find the roots $\kappa_1, \kappa_2$ of the functions
\begin{align} 
g^K_{k}(x) & := \tilde{K} -  \frac{M_{T_k}^{u_{k}}(x)}{M_{T_k}^{u_{k+1}}(x)},  \label{eq:fltfunc} \\
{g}^K_{\alpha,\beta}(x) & := \frac{M_{T_\alpha}^{u_\beta}(x)}{M_{T_\alpha}^{u_\alpha}(x)} +  \sum_{k=\alpha+1}^\beta K \Delta_k \frac{M_{T_\alpha}^{u_k}(x)}{M_{T_\alpha}^{u_\alpha}(x)} - 1 .\label{eq:swaptionfunc}
\end{align}
By Lemma \ref{lem:monoton} this amounts to finding the roots of a function which has a single optimum and is monotonic when moving away from this optimum.
Numerical determination of the roots of such well-behaved one-dimensional functions poses no problem. 
Having determined those bounds valuation reduces to a one-dimensional integration of a function that is falling at least like $1/x^2$ (depending on the moment generating function of the affine process), so also numerical integration is feasible.
Note that besides caps, floors and swaptions, options like digital options or Asset-or-Nothing options can be calculated in a similar manner.

\subsection{Examples} \label{sec:examples}
The first part of this section looks at the benchmark case of a Brownian motion, where everything can also be calculated in closed form. Afterwards Ornstein-Uhlenbeck processes are discussed. The section concludes with examples of possible volatility surfaces.
\subsubsection*{Brownian motion}
Choose $X_t = B_t$, a standard Brownian motion starting in 0. 
The conditional moment generating function is
$$\mathcal{M}_{B_T \vert B_t}(u) = \EV{\e^{u B_T} \vert \F_t} = \ex{u B_t + \frac{u^2}{2} (T-t)}.$$
Hence this is an affine process with $\phi_t(u) = \frac{u^2}{2} t$ and $\psi_t(u) = u$. 
Consider the time $0$ price of a floorlet as given in \eqref{eq:floorlet} with $t=0$.
Since $M_t^u(-x) = M_t^u(x)$ one finds that in this case  
$\kappa_2 = \kappa$ and $\kappa_1 = - \kappa$, where $\kappa$ is the unique positive root of \eqref{eq:fltfunc} if $g^{K}_{k+1}(0) < 0$ and $\kappa=0$ otherwise. 
By \eqref{eq:floorlet} the floorlet price $\Flt(0,T_k,T_{k+1},K)$ is
\begin{align*}P(0,T) \EV[\QM^{T}]{ \left( {\tilde{K}} \e^{\frac{u_{k+1}^2}{2} (T-T_k)}\cosh(u_{k+1} B_{T_k}) -  \e^{\frac{u_{k}^2}{2} (T-T_k)} \cosh(u_{k} B_{T_k})\right)  \1{\vert B_{T_k} \vert \leq \kappa} }. 
\end{align*}
By the symmetry of a Brownian motion starting in $0$
\begin{align*}  \EV{\cosh(z B_t) \1{\vert B_t \vert \leq \kappa}} & = \EV{\e^{z B_t} \1{\vert B_t \vert \leq \kappa}} = \EV{\e^{-z B_t} \1{\vert B_t \vert \leq \kappa}} \\
&= \e^{\frac{1}{2} t z^2} \left( \Phi \Big(\frac{\kappa}{\sqrt{t}}- z \sqrt{t} \Big) - \Phi \Big (-\frac{\kappa}{\sqrt{t}} - z \sqrt{t} \Big) \right) ,
\end{align*}
where $\Phi$ denotes the cumulative distribution function of a standard normal distributed random variable. Hence
\begin{align*}
\mathrm{Flt}(0,T_k,T_{k+1},K)  =  {\tilde{K}} & P(0,T) \e^{u_{k+1}^2  \frac{T}{2} } \left( \Phi \Big(\frac{\kappa}{\sqrt{T_k}}- u_{k+1} \sqrt{T_k} \Big) - \Phi \Big(- \frac{\kappa}{\sqrt{T_k}}-u_{k+1} \sqrt{T_k}\Big)  \right) \\
- & P(0,T) \e^{u_k^2  \frac{T}{2} } \left( \Phi \Big(\frac{\kappa}{\sqrt{T_k}}-u_k \sqrt{T_k}\Big) - \Phi \Big(- \frac{\kappa}{\sqrt{T_k}}-u_k \sqrt{T_k}\Big)  \right).
\end{align*}
Slightly more complicated formulas exist when $B$ is replaced with a Brownian motion with constant drift and volatility and a starting value different from $0$. 

Swaptions can be treated the same way. Let $\kappa$ be the unique positive root of \eqref{eq:swaptionfunc} if $g^K_{\alpha,\beta}(0) > 0$ and $\kappa=0$ otherwise. Then
\begin{align*}
\mathrm{Put Swaption}&(0,T_\alpha,T_\beta,K) =  P(0,T)   \e^{u_\beta^2  \frac{T}{2} } \left( \Phi \Big( \frac{\kappa}{\sqrt{T_\alpha}}- u_\beta \sqrt{T_\alpha}\Big) - \Phi \Big(- \frac{\kappa}{\sqrt{T_\alpha}}-u_\beta \sqrt{T_\alpha} \Big)  \right) \\
& - P(0,T) \e^{u_\alpha^2  \frac{T}{2} } \left( \Phi \Big(\frac{\kappa}{\sqrt{T_\alpha}}-u_\alpha \sqrt{T_\alpha} \Big) - \Phi \Big(-\frac{\kappa}{\sqrt{T_\alpha}} -u_\alpha \sqrt{T_\alpha} \Big)   \right) \\
& + P(0,T)  \sum_{k=\alpha+1}^\beta K \Delta_k \e^{u_k^2  \frac{T}{2} } \left( \Phi \Big( \frac{\kappa}{\sqrt{T_\alpha}}- u_k \sqrt{T_\alpha}\Big) - \Phi \Big(- \frac{\kappa}{\sqrt{T_\alpha}}- u_k \sqrt{T_\alpha} \Big)  \right).
\end{align*}

\subsubsection*{Ornstein-Uhlenbeck (OU) processes} \label{sec:OUprocess}
The OU process $X$ generated by a Lévy process $L$ is defined as the unique strong solution of (see \citet{SA99}, section 17)
\begin{equation} \dd{X_t} = - \lambda X_t \dd{t} + \dd{L_t}, \quad X_0 = x_0.  \label{eq:OUprocess} \end{equation}
Then $ Y_t := \e^{\lambda t} X_t 
= x_0 + \int_0^t \e^{\lambda s} L_s \dd{s}. $
Using the key formula of \citet{ER99} it follows that
$$ \EV{\e^{u X_t}} = \EV{\ex{\e^{-\lambda t} u Y_t}} = \ex{\e^{-\lambda t} u x_0 + \int_0^t \kappa(\e^{-\lambda s} u) \dd{s}},$$
where $\kappa(u) = \logn{\EV{L_1}}$ is the cumulant generating function of the Lévy process $L$.
Hence this process is affine with 
\begin{equation} \psi_t(u) = \e^{-\lambda t} u \quad \text{ and } \quad \phi_t(u) =  \int_0^t \kappa(\e^{-\lambda s} u) \dd{s} = \frac{1}{\lambda} \int_{\e^{-\lambda t}}^1 \frac{\kappa(v u)}{v} \dd{v}. \label{eq:phiOU} \end{equation}
By Corollary 2.10 in \citet{DFS03} every affine process with state space $\R$ is in fact an OU process. Hence in the context of affine processes defined on the real line OU processes are the right class to consider.  
For application it should be possible to calculate the integral in \eqref{eq:phiOU} analytically. Two examples where this is possible are presented below.

\begin{remark}
If $L$ is a martingale, the process in \eqref{eq:OUprocess} is mean-reverting to zero, however shifting the mean to $\theta$ is easily done by using $Z_t = \theta+X_t.$ Then $\dd{Z_t} = \lambda (\theta-Z_t) \dd{t} + \dd{L_t}$ and
$$\EV{\e^{u Z_t} } = \ex{(\phi_t(u) + \theta u  (1-\e^{-\lambda t})) + \psi_t(u) Z_0}.$$ 
Hence $Z$ is again affine with $\psi_t^\theta(u) = \psi_t(u)$ and $\phi_t^\theta(u) = \phi_t(u) + \theta u  (1-\e^{-\lambda t})$.
Note that this process is then generated by the Lévy process $\tilde{L}_t = L_t + \theta \lambda t$, i.e. the original Lévy process plus an additional drift of $\theta \lambda$.
\end{remark}

The first example is the classical OU process generated by a Brownian motion $ \sigma B$, where $\kappa(u) = \frac{1}{2} \sigma^2 u^2$. This process is described by
$$\dd{X_t} =  - \lambda  X_t \dd{t} + \sigma \dd{B_t},  \quad X_0 = x. $$ The integral in \eqref{eq:phiOU} is 
\begin{align} \label{eq:phicont}
\phi_t(u) = \frac{1}{\lambda} \int_{\e^{-\lambda t}}^1 \frac{ \kappa(v u)}{v} \dd{v} 
 =  \frac{\sigma^2 u^2}{4 \lambda} (1-\e^{-2 \lambda t}).
\end{align}

%
%
%
%
%
%

With Brownian motion describing the continuous part of Lévy processes., for the second example we consider a pure jump process, namely a Double $\Gamma$-OU process. $\Gamma$-OU processes are generated by a compound Poisson process with jump intensity $\lambda \beta$ ($\lambda$ being the same as in \eqref{eq:OUprocess}) and exponentially distributed jumps with expectation value $\alpha$. The limit distribution of this process is a $\Gamma$-distribution, which gives the process its name. 
As the generating compound Poisson process is strictly increasing, the generated $\Gamma$-OU process is a subordinator and stays above 0. In order to find an OU process with values in $\R$ consider the difference of two independent compound $\Gamma$-OU processes $L^+, L^-$ with parameters $\alpha^+, \beta^+, \alpha^-, \beta^-$ and set $\lambda^+ = \lambda \beta^+, \lambda^- =  \lambda \beta^-$. Then $L = L^+ - L^-$ is a compound Poisson process, where positive jumps with expected jump size $\frac{1}{\alpha^+}$ are arriving at rate $\lambda^+$, while negative jumps with expected jump size $\frac{1}{\alpha^-}$ are arriving at rate $\lambda^-$. 

The cumulant generating function of a compound Poisson process with exponential jumps is $\frac{\lambda \beta u}{\alpha -u}$, which is defined for $u < \alpha$. Hence for $u \in (-\alpha^-,\alpha^+)$ the moment generating function of the combined process $L$ is
\begin{align*}
\EV{\e^{u L_1}} & 
= \EV{\e^{u L^+_1 }} \EV{\e^{ - u L^-_1}} 
= \ex{\lambda \frac{(\beta^+ + \beta^-) u^2 + (\beta^+ \alpha^- - \beta^- \alpha^+) u}{(\alpha^+ -u)(\alpha^- +u)}}.
\end{align*}
Inserting this into \eqref{eq:phiOU} straightforward calculations show that the function $\phi$ for the resulting OU process is given by
\begin{equation} \label{eq:phijump}
\begin{aligned}
\phi_t(u) & = \frac{\beta^+ + \beta^-}{2} \logn{\frac{(\alpha^+-\e^{-\lambda t}u)(\alpha^-+\e^{-\lambda t}u)}{(\alpha^+-u)(\alpha^-+u)}} \\
& + \frac{\beta^+ - \beta^-}{2 } \logn{\frac{(\alpha^+-\e^{-\lambda t}u)(\alpha^-+u)}{(\alpha^+-u)(\alpha^-+\e^{-\lambda t}u)}}.
\end{aligned}
\end{equation}

It is also possible to combine the two approaches by considering an OU process generated by a Lévy process which is the difference of two compound Poisson processes plus a Brownian motion, all of which are independent. The resulting $\phi$ then follows by adding up the two functions \eqref{eq:phicont} and \eqref{eq:phijump} and for this process $\mathcal{V} = \{u \in \C: - \alpha^- < \mathrm{Re}(u)  < \alpha^+ \}$. By the previous remark it is also possible to shift this process by $\theta$. Such OU processes are used in the following numerical examples.

\subsubsection*{Volatility surfaces}
With the OU process of the previous section it is possible to generate volatility smiles as well as volatility skews. For illustration we consider a term structure with constant interest rates of $3.5\%$. The tenor structure and therefore the forward interest rates are based on half year intervals. Implied volatilities are then calculated for caplets with maturities over a 5-year period and strikes ranging from $0.02$ to $0.07$. Figure \ref{fig:OUskew} shows a skewed volatility surface while figure \ref{fig:OUsmile} shows a very pronounced smile, both of which are generated by an $OU$ process of the just introduced type. 
As mentioned in the previous chapters forward interest rates in this type of model will be bounded from below. The bounds in these examples are at $1\%$ for the forward interest rate expiring after half a year and decrease to basically $0\%$ for the forward interest rate which expires in 5 years. Hence they are well within reasonable boundaries. 
\begin{figure}
\centering
\includegraphics[width=0.7 \textwidth, trim=0cm 2cm 0cm 2.5cm,clip=true]{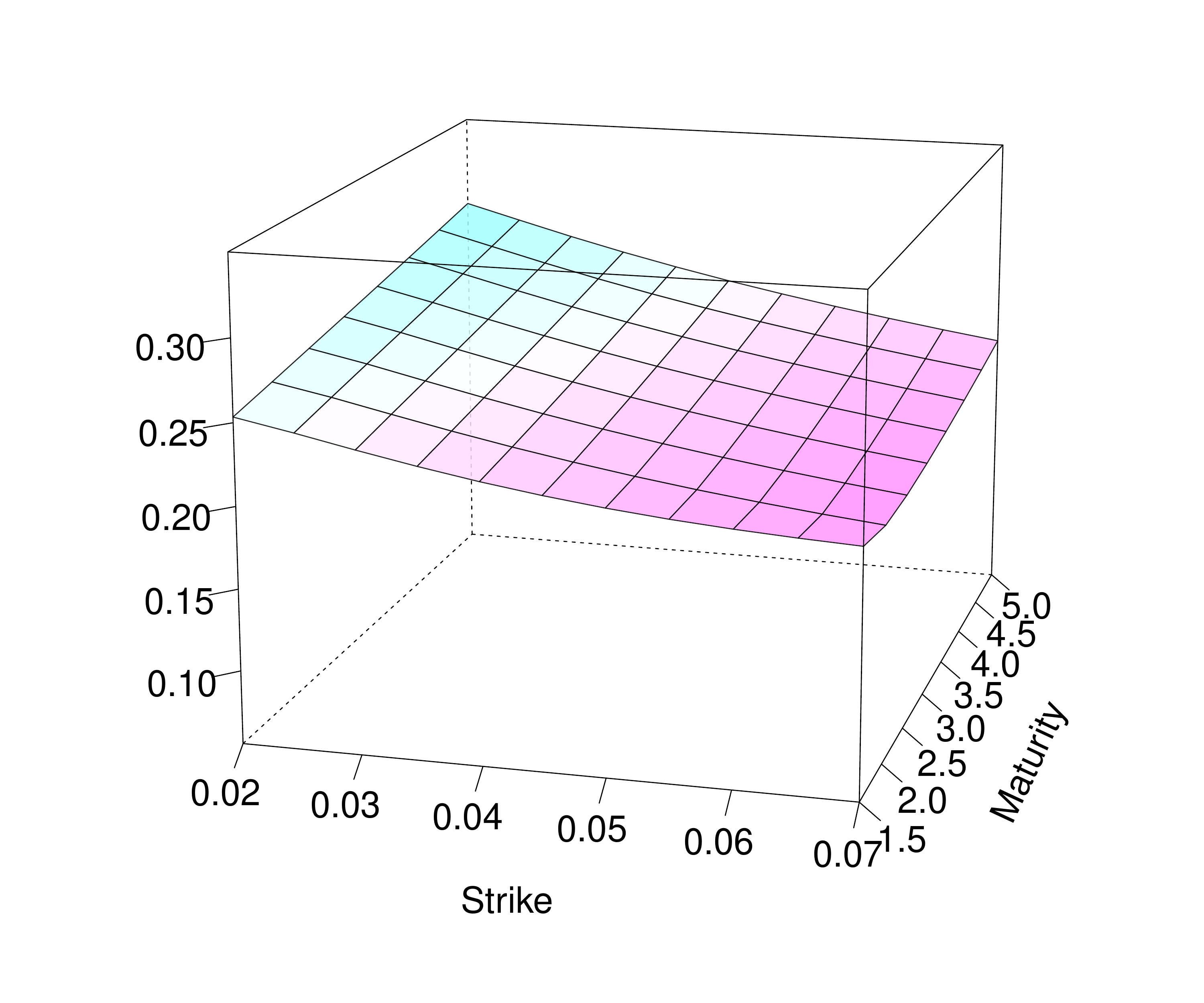}
\captionsetup{singlelinecheck=false, margin = 1.5 cm}
\caption[OU skew]{Implied volatility skew of caplets generated by an OU process with parameters $\lambda = 0.02, \alpha^+ = 12, \alpha^- =10, \beta^+=50, \beta^- = 5, \sigma=0.3, \theta=0.5, x=0.7$ and $T = 10$.}
\label{fig:OUskew}
\end{figure}
\begin{figure}
\centering
\includegraphics[width=0.7 \textwidth,trim=0cm 2cm 0cm 2.5cm,clip=true]{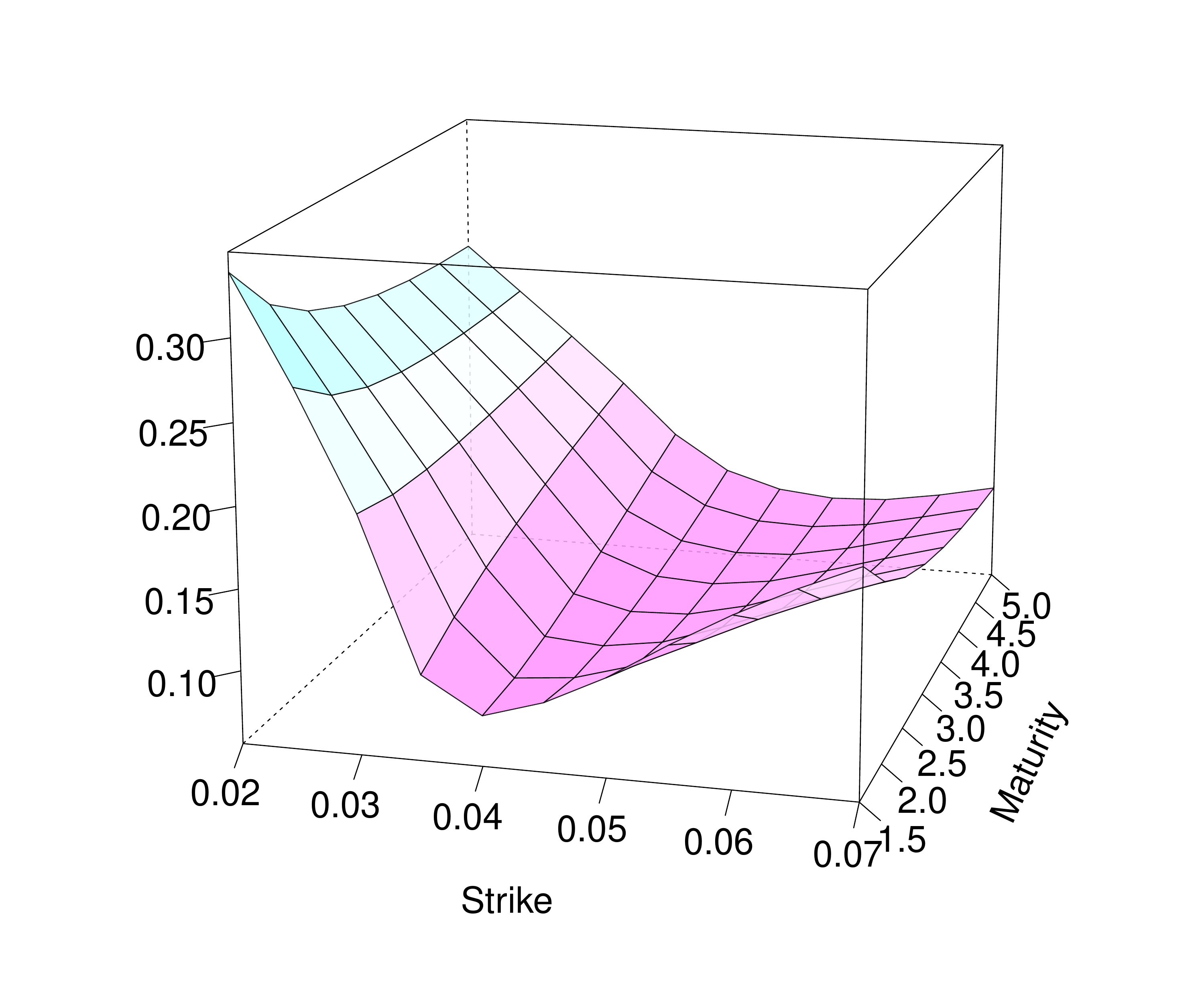}
\captionsetup{singlelinecheck=false, margin = 1.5 cm}
\caption[OU smile]{Implied volatility smile of caplets generated by an OU process with parameters $\lambda = 0.02, \alpha^+ = 50, \alpha^- =5, \beta^+=50, \beta^- = 10, \sigma=0, \theta=0, x=1$ and $T = 10$.}
\label{fig:OUsmile}
\end{figure}
For completeness an example of at-the-money implied volatilities for swaptions with maturities and underlying swap rates ranging from 2 to 7 years is displayed in figure \ref{fig:swaptionsurface}. 

\begin{figure}
\centering
\includegraphics[width=0.7 \textwidth]{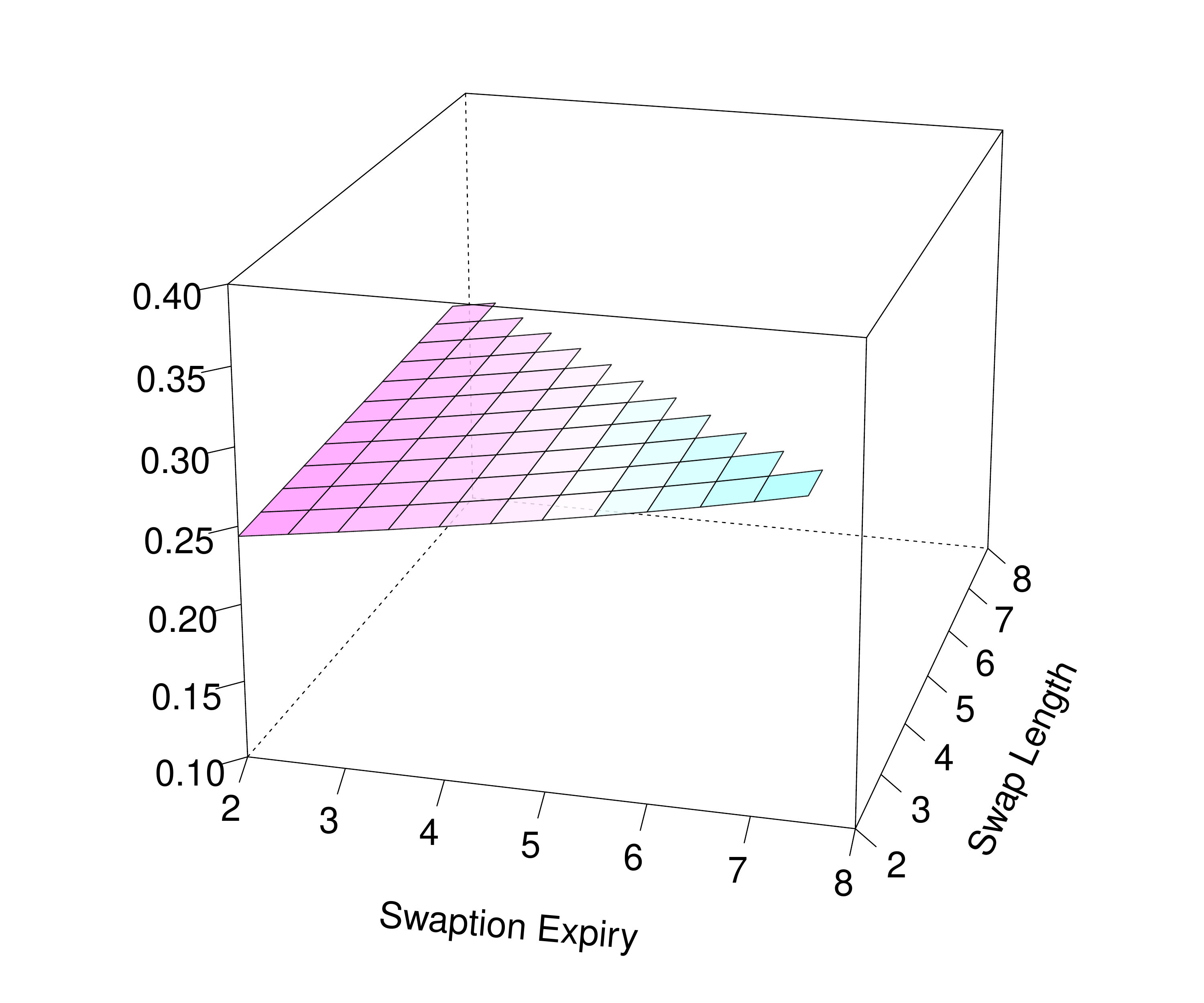}
\captionsetup{singlelinecheck=false, margin = 1.5 cm}
\caption[Swaption Implied volatilites]{Swaption implied volatilites generated by an OU process with parameters $\lambda = 0.02, \alpha^+ = 12, \alpha^- =10, \beta^+=50, \beta^- = 5, \sigma=0.3, \theta=0.5, x=0.7$ and $T = 10$.}
\label{fig:swaptionsurface}
\end{figure}

\section*{Conclusion}
Classical interest rate market models are not capable of simultaneously allowing for semi-analytical pricing formulas for caplets and swaptions and guaranteeing nonnegative forward interest rates. One exception are the affine LIBOR models presented in \citet{KPT11}. This paper modifies their approach to also allow for driving processes which are not necessarily nonnegative. Caplet and swaption valuation is possible via one-dimensional numerical integration. This allows for a fast calculation of implied volatilities for these types of interest rate derivatives. With the additional flexibility of real-valued affine processes this type of model is capable of producing skewed implied volatility surfaces as well as implied volatility surfaces with pronounced smiles.

\begin{appendix}
\section{Proofs}
\begin{proof}[Proof of Lemma \ref{lem:monoton}]
For a function $f(x)$ denote its even and odd part by
$$f^e(x) = \frac{1}{2} (f(x) + f(-x)), \qquad f^o(x) = \frac{1}{2} (f(x) - f(-x)).$$
Note that if $f$ is monotonically increasing, the same is true for $f^o$.
Then \eqref{eq:coshmartingalefunction} can be written as
\begin{align*}
M_t^u(x) 
& = \frac{1}{2} \left(  \e^{\phi_{T-t}(u) + \psi_{T-t}(u) x}  +  \e^{\phi_{T-t}(-u) + \psi_{T-t}(-u) x} \right) \\
& =  \e^{\phi_{T-t}^e(u) + \psi_{T-t}^e(u) x} \cosh(\phi_{T-t}^o(u)+\psi_{T-t}^o(u) x)
\end{align*}
and
\begin{equation} \label{eq:Mufrac}
\frac{M_t^{u_{i}}(x)}{M_t^{u_{0}}(x)} =  \e^{(\phi_{T-t}^e(u_i) - \phi_{T-t}^e(u_{0}) )+ (\psi_{T-t}^e(u_i) -  \psi_{T-t}^e(u_{0})  ) x} \frac{\cosh(\phi_{T-t}^o(u_i)+\psi_{T-t}^o(u_i) x)}{\cosh(\phi_{T-t}^o(u_{0})+\psi_{T-t}^o(u_{0}) x)}.
\end{equation}
If $u_i = u_0$, then \eqref{eq:Mufrac} is constant and has no influence regarding monotonicity or maxima. Hence from now on assume $u_0 > u_i$ for all $i$. The function $g$ of equation \eqref{eq:monoton} can be written as
$$g(x) = \sum_{i=1}^n {c_i} \e^{A_i} \e^{a_i x} \frac{\cosh(B_i + b_i x)}{\cosh(B_0 + b_0 x)},$$
where for $i=0,\dots,n$
\begin{equation*}
\begin{aligned}
A_i & = (\phi_{T-t}^e(u_i) -  \phi_{T-t}^e(u_{0})), \qquad & B_i = \phi_{T-t}^o(u_i),  \\
a_i & = (\psi_{T-t}^e(u_i) -  \psi_{T-t}^e(u_{0})), & b_i = \psi_{T-t}^o(u_i). 
\end{aligned}
\end{equation*}

Since $\psi$ is monotonically increasing (see Lemma \ref{lem:psimontonicity}), also $\psi^o$ is monotonically increasing. With $\psi^o(0)=0$ it follows that $b_i \geq 0$ for all $i$. Furthermore note that $a_i < b_0 - b_i$ is equivalent to $\psi_{T-t}(u_i) < \psi_{T-t}(u_0)$ and $-a_i < b_0 - b_i$ is equivalent to $\psi_{T-t}(-u_0) < \psi_{T-t}(-u_i)$. Since $u_0 > u_i \geq 0$ the monotonicity of $\psi$ yields
\begin{equation} \vert a_i \vert < b_0 - b_i. \label{eq:aiest} \end{equation}

An elementary calculation gives
$$g^\prime(x) =  \frac{1}{ \cosh(B_0 + b_0 x)^2} \sum_{i=1}^n c_i \e^{A_i} \e^{a_i x} f_i(x), $$
where
\begin{align*} 
f_i(x) & = a_i \cosh(B_i+ b_i x)\cosh(B_0+b_0 x)+ b_i \sinh(B_i+b_i x) \cosh(B_0+b_0 x) \\
& - b_0 \cosh(B_i+b_i x) \sinh(B_0 +b_0 x).
\end{align*}
The derivative of $f_i$ is 
\begin{equation} \label{eq:gprime}
\begin{aligned} f_i^\prime(x)  =&   b_i \cosh(B_0 + b_0 x) \Big(a_i \sinh(B_i + b_i x) + (b_i - b_0) \cosh(B_i + b_i x)\Big) \\
 + &   b_0 \cosh(B_i + b_i x) \Big(a_i\sinh(B_0 + b_0 x) +  (b_i - b_0)  \cosh(B_0 + b_0 x)\Big) .
\end{aligned}
\end{equation}
Using \eqref{eq:aiest} the terms inside the brackets of each row in \eqref{eq:gprime} are strictly less than
$$ \vert a_i \vert (\sign{a_i} \sinh(B_j + b_j x) - \cosh(B_j + b_j x)) \leq 0, \qquad (j=i,0).$$
The last inequality is true since $\cosh(x) \pm \sinh(x) \geq 0$. The terms outside of the brackets in \eqref{eq:gprime} are all positive. Hence $f^\prime \geq 0$ and the $f_i$ are monotonically decreasing. Using \eqref{eq:aiest} a simple calculation shows that $ \lim_{x\rightarrow - \infty} f_i(x) =  \infty  \text{ and } \lim_{x\rightarrow  \infty} f_i(x) = -  \infty $. Since $c_i > 0$ for all $i$ the same is true for $\sum_{i=1}^n c_i \e^{a_i x} f_i(x)$. Hence $g$ has a single maximum and is decreasing to the left and right of it. Furthermore $g(x) \geq 0$ and again using \eqref{eq:aiest} $\lim_{x \rightarrow  \infty} g(x) = \lim_{x \rightarrow - \infty} g(x) = 0$.
\end{proof}

\begin{proof}[Proof of Lemma \ref{lem:fouriertransform}]
$f$ is continuous with compact support. Hence the extended Fourier transform $\hat{f}(z) = \int_{\R} f(x) \e^{- \i z x} \dd{x}$ exists for all $z \in \C$ and is analytic. For $z \neq 0, z \neq - \i v_k$, it is given by
\begin{align*}
\hat{f}(z)& = \int_{\kappa_1}^{\kappa_2} \e^{-\i z x} f(x) \dd{x} =  \frac{1}{\i z} \int_{\kappa_1}^{\kappa_2} \e^{-\i z x} f^\prime(x) \dd{x} \\
& = \frac{1}{\i z} \sum_k  {\frac{C_k v_k }{v_k - \i z}  { \left(\e^{(v_k - \i z) \kappa_2} - \e^{(v_k - \i z) \kappa_1 } \right) } }. 
\end{align*}
Since $\hat{f}(u - \i R) = O(u^{-2})$ for fixed $R$, it is absolutely integrable. By Fourier inversion 
\begin{equation*} \begin{split} f(x) & = \frac{1}{2 \pi} \int_{\mathrm{Im}(z) = - R} \e^{\i z x} \hat{f}(z) \dd{z} = \frac{1}{2 \pi} \int^\infty_0 \mathrm{Re}\left( \e^{(\i u + R) x} \hat{f}(u-\i R) \right) \dd{u},
\end{split} \end{equation*}
where the last equation follows from the fact that $f$ is real valued and the symmetry $\overline{\hat{f}(z)} = \hat{f}(-\overline{z})$. Since 
$\int \EV{\vert \e^{(\i z + R) X_t} \vert \vert \F_s}  \vert \hat{f}(z) \vert \dd{z} = \mathcal{M}_{X_t\vert X_s}(R) \int \vert \hat{f}(z) \vert \dd{z}$ is bounded if $R \in \mathcal{V} \cap \R$, conditional expectation and integration can be interchanged.
\end{proof}
\end{appendix}

\end{spacing}

\setlength{\bibsep}{0.0pt}

\end{document}